\newtheorem{theorem}{Theorem}
\begin{document}
\title{Floquet driven long-range interactions induce super-extensive scaling in quantum batteries}

\author{Stavya Puri$^{1,2}$, Tanoy Kanti Konar$^{2}$, Leela Ganesh Chandra Lakkaraju$^{2,3,4}$, Aditi Sen(De)$^{2}$}
\affiliation{$^1$ Department of Physics, Birla Institute of Technology and Science - Pilani, Rajasthan 333031, India}
\affiliation{$^2$ Harish-Chandra Research Institute, A CI of Homi Bhabha National Institute, Chhatnag Road, Jhunsi, Allahabad - 211019, India}
\affiliation{$^3$ Pitaevskii BEC Center and Department of Physics, University of Trento, Via Sommarive 14, I-38123 Trento, Italy }
\affiliation{$^4$ INFN-TIFPA, Trento Institute for Fundamental Physics and Applications, Trento, Italy }

\begin{abstract}
    Achieving quantum advantage in energy storage and power extraction is a primary objective in the design of quantum-based batteries. We explore  how long-range (LR) interactions in conjunction with Floquet driving can improve the performance of quantum batteries, particularly when the battery is  initialized in a fully polarized state. In particular, we analytically prove that the upper bound of the instantaneous power obtained through this system-charger duo scales quadratically with moderate system-size.
    By optimizing the driving frequency, we demonstrate that the maximum average power which is a lower bound of the instantaneous power  can achieve the super-extensive scaling with  system-size, thereby providing {\it genuine quantum advantage}.  Further,  we illustrate that the inclusion of either two-body or many-body interaction terms in the LR charging Hamiltonian leads to a scaling benefit.   We also discover that a super-linear scaling in power results from increasing the strength of interaction  compared to the transverse magnetic field  and the range of interaction with low fall-off rate, highlighting the advantageous role of long-range interactions in optimizing quantum battery charging.

\end{abstract}

\maketitle

\section{Introduction}
The miniaturization of classical devices has paved the way for the development of quantum thermal devices, including quantum batteries \cite{Alicki,Batteryreview, battery_rmp_review}, refrigerators \cite{popescu10}, transistors \cite{joulain16}, and heat engines \cite{bender2000,quan2007}. These devices leverage quantum-mechanical principles to outperform their classical counterparts while also contributing to advances in thermodynamic concepts at microscopic and nanoscale levels \cite{gemmer2004}. Having focused on quantum batteries (QBs), a crucial aspect to achieve quantum advantage is the implementation of \textit{global operations} during the charging process. Such operations lead to a super-extensive scaling of power with the increase of system-size, wherein power exhibits a nonlinear dependence on system-size \cite{dario21, battery_rmp_review,sarkar2025}, a phenomenon referred to as \textit{genuine quantum advantage} \cite{andolina2024genuinequantumadvantagenonlinear}.
In addition, several theoretical approaches have been developed to improve the performance of QBs~\cite{andolina2017,andolina2019,santos2019,andolina2020, Bera2020QB,Modispinchain,srijon2020, srijon21, srijon2021, ksen_battery_1,alba_1_20,alba_1_22,konar_battery_1,konar_battery_2,konar_battery_3,ksen_battery_4,arjmandi_pra_2022,santos_pra_2023,ksen_battery_5,Chaki2024Sep,AI_quantum_battery,sashi_pra_1,mesure_battery,remote_charging_battery,topological_quantumbattery,NV_battery,sashi_2,perciavalle2024, battery_ico,rossini_prb_2019,arjmandi_pre_2023,Gyhm2024Jan}. 
These theoretical insights have also been experimentally investigated in various physical platforms, including quantum dots \cite{wennigerexpqdots}, transmons \cite{superconducting_battrey_1, superconductQBexp, GemmeexpIBMsupercond}, organic semiconductors \cite{Quach2022Jan}, and nuclear magnetic resonance \cite{MaheshexpNMR}.

In recent years,  Floquet or time-periodic driving, also referred to as {\it Floquet engineering}, has emerged as a vital tool for exploring unique characteristics in many-body systems that are typically inaccessible in equilibrium conditions (for details, see reviews \cite{Bukov2015_review, Eckardt2015_review,Mori2023_review}). Notable examples include topological order \cite{Bukov2015_review,milad_prb_2022}, prethermalization region \cite{Eckardt2015_review,bukov_prethermalization_2023}, dynamical localization and stabilization \cite{dAlessio2016} and the creation of artificial magnetic fields \cite{BlochRMP08,BlochFloquet}. 
Furthermore,  periodic driving can be easily implemented via oscillating electromagnetic radiation in experiments with cold atoms in optical lattices \cite{Eckardt2015_review, Eckardt2017RMP} and solid state materials.
Hence, it is natural to apply Floquet evolution towards building quantum technologies~ \cite{floquet_communication_prr,floquet_quantum_computing,floquet_refrigerator,floquet_transistor,floquet_heatengine_1,floquet_heatengine_2,floquet_battery}.  In the case of QB, it was shown that although the effective periodic charging involves collective operations, it does not lead to super-extensive scaling of power \cite{floquet_battery}.

In this work,  we exhibit that long-range (LR) interactions combined with Floquet driving in the charging process can enhance the performance of the quantum batteries, resulting in quantum advantage. Specifically, we prove analytically that starting with  the initial product eigenstate of a non-interacting battery Hamiltonian, the instantaneous power of the QB with the help of a charger having long-range interactions of two bodies, commonly known as the Lipkin Meshkov Glick (LMG) model \cite{LMGmodel}, indeed can scale super extensively with system-size. Moreover, we demonstrate that such a super-linear scaling persists when charging with Floquet driving involves two types of LR \(XY\) spin models having both two- and many-body interactions, following a power-law decay. Therefore, we establish that a {\it genuine quantum advantage} \cite{andolina2024genuinequantumadvantagenonlinear} can be accomplished for both kinds of LR interactions when the coordination number is close to its maximum, the fall-off rate of interactions is low enough to provide truly long-range and the magnetic field strength is calibrated appropriately.
Importantly, such LR interactions arise naturally in experiments with trapped ions and cold-atoms on which 
Floquet driving can also be implemented 
\cite{Eckardt2017RMP,iontrapp_floquet1, iontrapp_floquet2, iontrapp_floquet3}.  
Additionally, it is well known that LR spin models are useful for quantum sensing and computation \cite{LewensteinLR, GoroshkovLR, ShajiLR, GaneshLRsensing, DebkantaLR}; our findings offer yet another advantageous use for these systems. Specifically, long-range interacting systems under Floquet interaction have been studied in topological ladders \cite{floquet_long_range_1} and photonic-lattice systems \cite{floquet_long_range_2}. 

The paper is organized in the following manner. In Sec. \ref{sec:numericalsuperlinear}, we prove analytically the scaling of the upper bound of the instantaneous power which indicates that power can scale superextensively with system-size. We then confirm this result by considering LMG model in Sec. \ref{sec:LMG}. Before concluding remarks in Sec. \ref{sec:conclu}, we investigate the scaling of power when the long-range interactions in the charging Hamiltonian follow the power law decay in Sec. \ref{sec:powerlaw}.

\section{Indication of super-linear scaling with long-range system via Floquet driving}
\label{sec:numericalsuperlinear}

To establish the quantum advantage, we investigate  the scaling of the instantaneous power at each stroboscopic time \(nT\), defined
as \(P_{ins}(t= nT) = \eval{\frac{d}{dt} (\text{Tr}[H_B \rho(t)])}_{t=nT}\), where \(H_B\) is the battery Hamiltonian and \(\rho(t)\) is the time evolved state. 
\begin{theorem}
A non-interacting \(N\)-site battery (with moderate \(N\)), when charged using a time-periodic  driving Hamiltonian with LR interactions, exhibits at most  super-extensive scaling of \(N^2\) in  instantaneous power with the increase of \(N\), i.e., 
\begin{equation}
        |P_{ins}(nT)|\leq a N^{\eta} + bN + c,  \quad\text{where} \; \eta\leq2.
\end{equation}
\end{theorem}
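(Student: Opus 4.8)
The plan is to expand the instantaneous power at stroboscopic times using the Heisenberg equation of motion and then bound each term by a polynomial in $N$. First I would write $P_{ins}(nT) = \operatorname{Tr}\big[\tfrac{d}{dt}(H_B\rho(t))\big]_{t=nT} = -i\operatorname{Tr}\big[[H_B, H_{tot}(nT)]\,\rho(nT)\big]$, where $H_{tot}(t) = H_B + H_C(t)$ is the full (battery plus time-dependent long-range charger) Hamiltonian during evolution; since $[H_B,H_B]=0$ only the cross-commutator $[H_B, H_C(nT)]$ survives. The key observation is that $H_B$ is a sum of $N$ single-site terms and $H_C$ contains at most two-body (or, more generally, bounded-body) long-range couplings, so $[H_B,H_C]$ is a sum over pairs $(i,j)$ of local operators with coefficients given by the long-range coupling constants $J_{ij}$.

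The central estimate is then $|P_{ins}(nT)| \le \sum_{i,j} |J_{ij}|\,\big\|[\,h_i,\, \text{(two-body term)}_{ij}]\big\| \le \text{const}\cdot \sum_{i\neq j}|J_{ij}|$, using $|\operatorname{Tr}[A\rho]|\le \|A\|$ for a state $\rho$ and the fact that each local commutator has operator norm bounded by an $N$-independent constant (set by the single-site energy scale and the interaction amplitude). The next step is to bound the coupling sum: for a genuinely long-range model the normalization (Kac-type or otherwise) and the fall-off exponent control $\sum_{i\neq j}|J_{ij}|$. In the extreme case of all-to-all LMG-type coupling with $J_{ij}=J/\mathcal{N}$ one gets $\sum_{i\neq j}|J_{ij}| \sim J N^2/\mathcal{N}$, which is $O(N^2)$ when $\mathcal{N}=O(1)$ and $O(N)$ under mean-field normalization $\mathcal{N}=N$; for a power-law $J_{ij}\propto |i-j|^{-\alpha}$ one splits into $\alpha<1$ (sum $\sim N^{2-\alpha}$, i.e. $\eta=2-\alpha\le 2$), $\alpha=1$ (sum $\sim N\log N$, absorbed into the $N^2$ bound for moderate $N$), and $\alpha>1$ (sum $\sim N$, giving the extensive $bN$ term). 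Collecting these cases yields $|P_{ins}(nT)| \le aN^{\eta}+bN+c$ with $\eta\le 2$, where $c$ absorbs any $N$-independent boundary contributions and $a,b$ depend on $J$, the field strength, and $\alpha$ but not on $N$.

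The main obstacle I anticipate is making the bound on the local commutator norms genuinely $N$-independent and uniform in the stroboscopic index $n$: one must argue that although $\rho(nT)$ depends on the full Floquet evolution, the quantity being traced, $[H_B,H_C(nT)]$, is a \emph{fixed} operator (or periodic in $n$) whose norm does not grow with $n$, so the bound holds at every stroboscopic time. A secondary subtlety is the precise meaning of "moderate $N$" — the clean $N^2$ statement requires that sub-leading logarithmic or normalization-dependent corrections not dominate, which is why the theorem is phrased as an upper bound with the looser exponent $\eta\le 2$ and additive lower-order terms rather than a sharp asymptotic. I would handle both points by working entirely at the level of operator norms and triangle inequalities, never needing the explicit form of $\rho(nT)$, so that the estimate is automatically uniform in $n$ and depends on $N$ only through the coupling sum.
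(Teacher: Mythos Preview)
Your approach proves a valid upper bound, but by a route that bypasses the Floquet structure the paper's theorem is really about. You bound \(|P_{ins}(nT)| \leq \|[H_B, H_{int}(nT)]\|\) using the \emph{instantaneous} two-body charger. The paper instead writes the stroboscopic von Neumann equation with the \emph{Floquet Hamiltonian}, \(\dot\rho|_{t=nT} = -i[H_{ch}^F, \rho(nT)]\), and bounds \(|P_{ins}(nT)| \leq \|[H_{ch}^F, H_B]\|\). It then computes \(H_{ch}^F\) term by term via the Floquet--Magnus expansion: the \(k\)-th Magnus term is a nested commutator \([H_{i_1},[H_{i_2},\ldots,[H_B,H_{int}]\ldots]]\) with \(k-1\) entries drawn from \(\{H_B,H_{int}\}\); a term containing \(l\) copies of \(H_{int}\) produces an effective \((l{+}2)\)-body operator. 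Summing over \(k\) with the triangle inequality and splitting at \(k=N-1\) yields the explicit bound in Eq.~(\ref{eq:lmg_bound}), whose \(N^2\) piece comes precisely from those Magnus terms that have built up to \(N\)-body interactions.

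This difference matters in two concrete ways. First, the paper works under Kac normalisation (\(\mathcal{N}\sim N\) for \(\alpha=0\)), so your direct coupling sum \(\sum_{i\neq j}|J_{ij}|\) is only \(O(N)\), not \(O(N^2)\); the \(N^2\) case ``\(\mathcal{N}=O(1)\)'' in your proposal is not the paper's setting. The paper's \(N^2\) does not come from the bare coupling sum at all --- it comes from the high-order Magnus commutators, which is exactly why the periodic driving is essential to the statement and why your time-independent estimate, which would hold equally for static charging, misses the point. Second, your bound offers no mechanism for the ``moderate \(N\)'' qualifier, whereas in the paper the \(N^2\) coefficient carries an explicit \(1/[(k+N)!\,2^{k+N-1}]\) suppression that kills the super-extensive piece at large \(N\). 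So while your estimate is logically an upper bound of the stated form, it neither invokes nor exposes the Magnus-generated many-body mechanism that is the actual content of the paper's proof.
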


\begin{proof}
Let us take the battery Hamiltonian as \(H_B = h_z\sum_{j} \sigma_j^z\), while the charging Hamiltonian is
\begin{equation}
    H_{ch}(t) = H_B + H_{int}(t),
\end{equation}
where \(H_{int}\) represents the LR \(XY\) model with an open boundary condition given by,
\begin{equation}
    H_{int}(t) = \sum_{\substack{i < j \\ |i - j| \leq Z}}^{N-Z} \frac{J(t)}{\mathcal{N}|i-j|^{\alpha}} \left(\sigma_{i}^{x} \sigma_{j}^{x} + \gamma \sigma_{i}^{y} \sigma_{j}^{y} \right).
    \label{eq:H_lmg}
\end{equation}
Here, \(J(t)\) is the time-dependent interaction strength between spins at sites \(i\) and \(j\), with \(\mathcal{N}= \sum_{r=1}^{Z} \frac{1}{r^\alpha}\) (where \(r = |i-j|\)) being the Kac normalization factor \cite{Kac_jmp_1963}. The parameter \(\gamma\) and \(Z\)  represent the anisotropy factor and the coordination number (i.e., the maximum interaction range) respectively (For further details, see Appendix \ref{sec:qBatteryset}). For \(\alpha=0\), where all spin-spin interaction strengths are equal, the model reduces to the LMG model, exhibiting several exotic properties~\cite{vidal_lmg_1, vidal_lmg_2, vidal_lmg_3}. The interaction Hamiltonian in this case simplifies to
\begin{eqnarray}
     H_{int}(t) &=\nonumber \frac{J(t)}{2N} \big[(1+\gamma)(S_{+}S_{-} + S_{-}S_{+} - N) \\
     &\quad + (1-\gamma)(S_{+}^2 + S_{-}^2) \big],
     \label{eq:H_int}
 \end{eqnarray}
where \(S_l = \frac{1}{2} \sum_j \sigma_j^l\) for \(l \in \{x, y, z\}\), and \(S_{\pm} = S_x \pm iS_y\). \(J(t)\) follows a square wave modulation with period \(T = \frac{2\pi}{\omega}\), where \(\omega\) is the square wave frequency given by
\begin{equation}
 J(t) = \begin{cases} 
      +J, & nT < t < (n+1/2)T, \\
      -J, & (n+1/2)T < t < (n+1)T.
   \end{cases}
\label{eq:sqwmod_NNN}
\end{equation}

Since the drive is periodic and evaluated at stroboscopic times, the unitary evolution is given by \(U_1 = \exp[-i (H_B + H_{int}) T/2]\) and \(U_2 = \exp[-i (H_B - H_{int}) T/2]\) and the Floquet unitary after \(n\) stroboscopic periods is \(U^F(nT) = (U_2 U_1)^n\), where the updated state at stroboscopic time is given as \(\rho(t=nT) = U^F(nT) \rho(0) U^{F \dagger}(nT)\), with \(\rho(0)\) being the initial state of \(H_B\). The instantaneous power at stroboscopic times is defined as \(P_{ins}(t=nT) = \eval{\frac{d}{dt} \text{Tr}[H_B \rho(t)]}_{t=nT}\). This allows us to bound the average power \(\langle P(nT) \rangle = \frac{W(nT)}{nT}\), where we observe that \(|\langle P(nT) \rangle| \leq \max_n |P_{ins}(t=nT)| \leq \| [H_F, H_B] \|\) \(\forall n\) with \(H_F\) being Floquet Hamiltonian used to determine an upper bound on the maximum average power (see Appendix \ref{sec:analyticalsuperlinear}).

In order to obtain the Floquet Hamiltonian, we apply the Floquet-Magnus expansion (FME) \cite{magnus1954solution} (see Appendix \ref{sec:advNNandNNN} for a detailed calculation with Hamiltonian having nearest-neighbor and next nearest-neighbor interaction) and the \(k^\text{th}\) term in this expansion is given by 
\begin{eqnarray}
H_{ch}^k &\leq \frac{1}{2^{k}T} \int_{0}^{T} dt_1 \int_{0}^{t_1} dt_2 \cdots \int_{0}^{t_{k-1}} dt_k \times \nonumber\\
    &\quad [H_B \pm H_{int}, [H_B \pm H_{int}, \dots, k\text{-terms}]].
\label{eq:H_F_k_init}    
\end{eqnarray}
Since these commutators are time-independent, they simplify to \(2^{k-1}\) nested commutators of the form \(\pm2[H_{i_1}, [H_{i_2}, [H_{i_3}, \dots, [H_{i_{k-1}}, [H_B, H_{int}]]\dots]]].\)
Applying the triangle inequality, we obtain
\begin{align}
 \| H_{ch}^k \| &\leq \frac{T^{k-1}}{2^{k-1} k!} \sum_{i_1} \sum_{i_2} \dots \sum_{i_{k-1}} \nonumber\\
 &\quad \Big[ \| [H_{i_1}, [H_{i_2}, \dots, [H_{i_{k-1}}, [H_B, H_{int}]]\dots]] \| \Big].
\label{eq:H_F_k_final}
\end{align}
By considering the norm of each \(k^\text{th}\) term in the effective Hamiltonian, we arrive at the upper bound on the instantaneous power \cite{dario21} at stroboscopic times as:
\begin{align}
    |P_{ins}(nT)| \nonumber &\leq N\left[\sum_{k=1}^{N-1}\frac{T^{k-1}}{k!}\frac{(k+3)}{2}\right] \\& \nonumber
    + N\left[\sum_{k=0}^\infty \frac{T^{k+N-1}}{(k+N)!} \frac{\sum_{i=0}^{N-2}(i+2)\binom{k+N-1}{i}}{2^{k+N-1}}\right] \\& 
    + N^2\left[\sum_{k=0}^\infty \frac{T^{k+N-1}}{(k+N)!}\frac{\sum_{i=N-1}^{k+N-1} \binom{k+N-1}{i}}{2^{k+N-1}} \right]
\label{eq:lmg_bound}
\end{align}
where we assume \(h_z=1\) and \(J=1\), though the result remains valid for other parameter choices. Since the factor inside the third bracket does not scale with \(N^{-1}\) (especially when N is moderate), we obtain that the right hand side scales as \(N^{2}\) and hence the proof (see Appendix \ref{sec:analyticalsuperlinear} for further details).
\end{proof}


Analytical and numerical simulations suggest that this super-extensive scaling cannot be observed when \(N\) is very large. Such a behavior can be explained from the second and third terms in Eq. (\ref{eq:lmg_bound}) which decreases with the increase of  \(N\) as it contains \(((k+N)! \; 2^{k+N-1})\) in the denominator resulting in linear scaling.

\begin{figure}    \includegraphics[width=\linewidth]{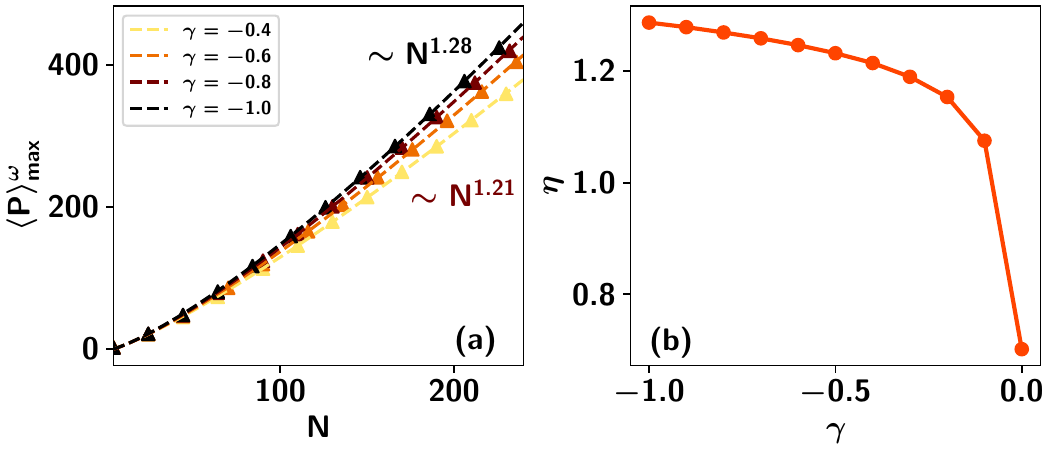}
    \caption{{\bf Scaling analysis for LMG charging Hamiltonian.} (a) \(\langle P \rangle _{\max}^\omega\) (ordinate) with system-size \(N\) (horizontal axis), when the charging Hamiltonian is the two-body LR \(XY\) model with \(\alpha =0\). By least-square fitting method, we find that it scales as \(\langle P \rangle _{\max}^\omega \sim a N^{\eta} + b\), where \(a\) and \(b\) are constants, with \(\eta>1\) as mentioned in the figure.   (b) Dependence of \(\eta\) (ordinate) on \(\gamma\) (abscissa) which shows the scaling upto \(\approx 1.3\). Other parameters are \(h_z=1\), and \(J=20\). All the axes are dimensionless. }
    \label{fig:LMG_MaxPvN_GammaDep}
\end{figure}



\begin{figure}
    \includegraphics[width=\linewidth]{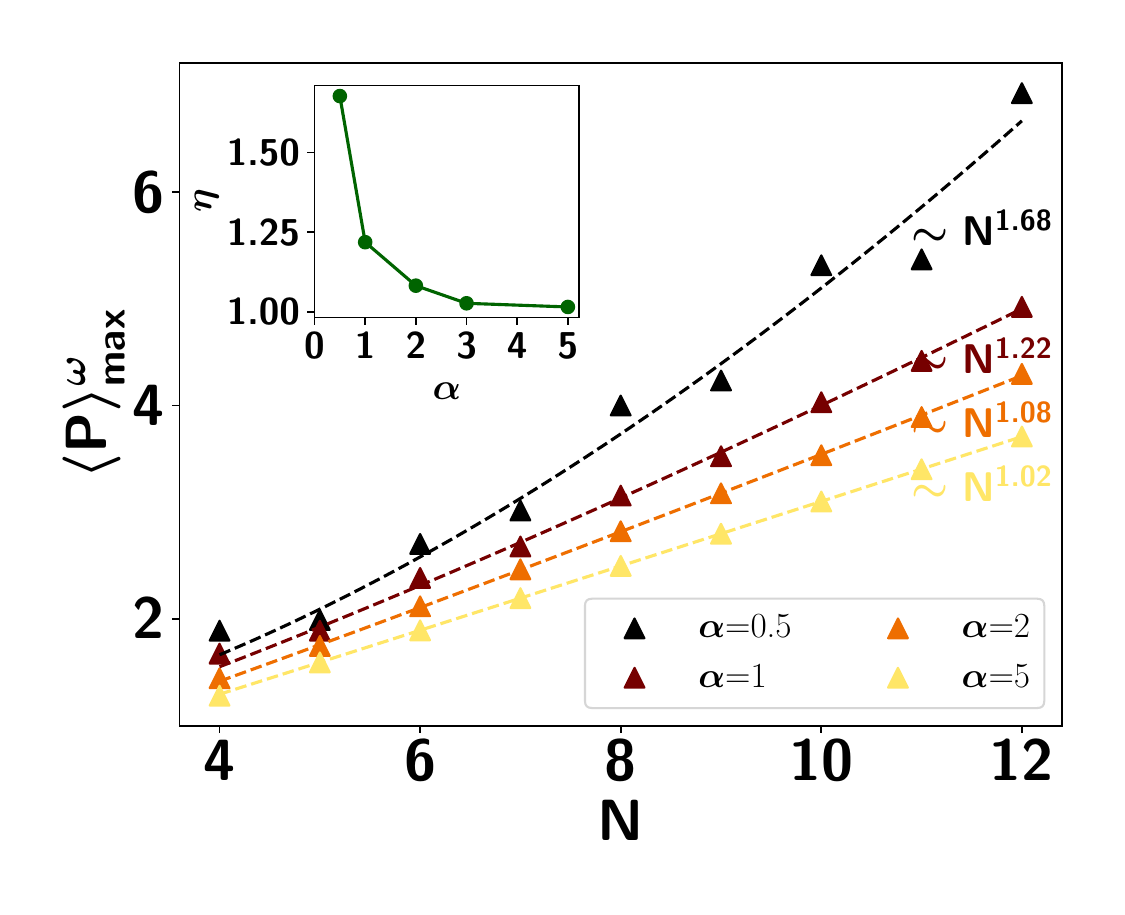}
    \caption{{\bf Scaling of LR \(XY\) charging Hamiltonian.} Behavior of \(\langle P \rangle _{\max}^{\omega}\) against system-size \(N\) with \(h_z=0.5\), \(J = 5\), \(\gamma=-1\). Periodic driving along with LR interaction induces super-extensive scaling, i.e., when \(\alpha <2\),  \(\langle P \rangle _{\max}^{\omega} \sim a N^{\eta} +b \)  with \(\eta>1\) and \(a\), \(b\) being constants. Inset depicts how super-linear scaling exponent \(\eta\) (vertical axis) changes with LR interaction strength \(\alpha\) (horizontal axis).  Both the axes are dimensionless.   }
    \label{fig:lr_MaxPvN_scaling}
\end{figure}

\subsection{Scaling of power in the LMG model.}
\label{sec:LMG}
Let us now exhibit that the scaling of maximum average power which is defined as \(\langle P \rangle_{\max}^{\omega}=\max_{{n,\omega}}{\frac{W(nT)}{nT}}\) (where maximization is performed over stroboscopic time \(t\) and frequency range \(\omega\)~\cite{Omegamaxreason})  can saturate the bound obtained in Theorem 1 although \(|\langle P(nT) \rangle| \le \max_n|P_{ins}(nT)|\) (see Appendix \ref{sec:analyticalsuperlinear}). Specifically, we observe that by taming the parameter suitably, \(\langle P \rangle _{\max}^{\omega} \sim N^{\eta}\) with \(1< \eta \le 1.5\) when \(-1 \le \gamma \le 0\) (see Fig. \ref{fig:LMG_MaxPvN_GammaDep}). Eg. when \(h_z/J = 0.05\) and \(\gamma = -1.0\), the scaling equation reads as \(\langle P \rangle _{\max}^\omega = a N^{1.28} + b\) where \(a= 0.39\), \(b= -2.24\) and  mean square error is \( 0.16 \% \) (see Fig. \ref{fig:LMG_MaxPvN_GammaDep}).  It also highlights that \(\gamma =-1\) is the best choice to obtain super-linear scaling compared to any values from \(\gamma <0 \). This can be intuitively understood as the norm of the commutator maximizes when $\gamma < 0$. Moreover, we find that the scaling exponent, \(\eta\), depends on the ratio between the strength of the magnetic field and  the interaction, \(h_z/J\). Precisely, when the interaction strength dominates over the strength of the magnetic field, i.e.,  when \(h_z/J\ll 1\) (in the paramagnetic phase), the non-linear scaling  with \(\eta>1\) is observed while  for \(h_z/J>1\) (in the ferromagnetic phase), the scaling remains  extensive although the magnitude of \(\langle P \rangle _{\max}^{\omega}\) is higher than  that for \(h_z/J>1\). Notice, further, that with the increase of \(J\) such that \(h_z/J \ll 1\), the scaling exponent increases and, finally, it saturates to \(N^{1.5}\). This behavior can be explained from the effective \(k\)-body interactions in the Floquet Hamiltonian as in Eq. (\ref{eq:H_F_k_init}) --  the terms having greater k-body interactions have more number of \(H_{int}\) in the commutator which results in more contributions in power by the strength of interaction than the magnetic field, i.e., in the paramagnetic phase. (see mathematical description in Appendix \ref{sec:analyticalsuperlinear}).

\begin{figure}    \includegraphics[width=\linewidth]{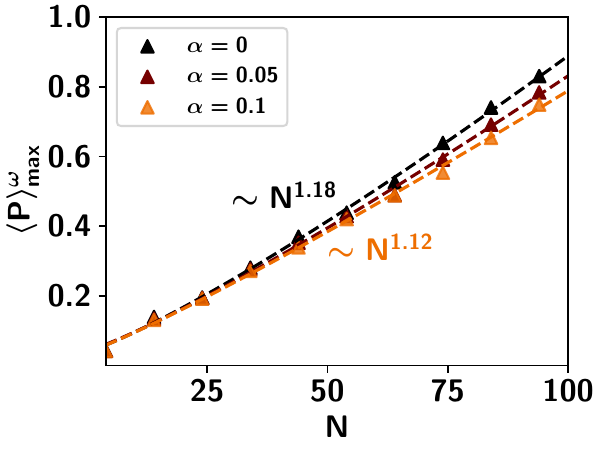}
    \caption{{\bf Scaling analysis for LR extended \(XY\) charging Hamiltonian.} \(\langle P \rangle _{\max}^\omega\) (ordinate) with system-size \(N\) (horizontal axis) for different fall-off rates \(\alpha\). Other parameters are \(h_z=0.02\), \(J=1\) and \(\gamma=-1\). Again, by least square fitting, we find \(\langle P \rangle _{\max}^\omega \sim a N^\eta + b\) where \(a\) and \(b\) are constants and \(\eta >1\) as mentioned in the figure for low \(\alpha\) values.  Both the axes are dimensionless. }
    \label{fig:extended_pmaxom}
\end{figure}

\section{Super-linear scaling of power in long-range Hamiltonian with power-law decay}
\label{sec:powerlaw}

So far, we have analyzed the scaling behavior of power for the LMG model, where all interactions are of the same order although, for physical systems like trapped ions \cite{Eckardt2017RMP,iontrapp_floquet1, iontrapp_floquet2, iontrapp_floquet3}, a more natural choice can be varied interaction strengths. To address this query, we consider two systems modeled by  Eq. (\ref{eq:H_int}) and  the extended \(XY\) model \cite{GoroshkovLR,VodolaLR,DebasisJacekLR,GaneshDebasisLR}, given by
 \begin{eqnarray}
    H_{int}^{extXY}=\sum_{j=1}^{N} \sum_{r=1}^{\frac{N}{2}}\frac{J(t)}{4\mathcal{N}r^\alpha} (\sigma_j^x\mathbb{Z}_r^z\sigma_{j+r}^x+\gamma \sigma_j^y\mathbb{Z}_r^z\sigma_{j+r}^y),
   \label{eq:H_extXY}
\end{eqnarray}
where $\mathbb{Z}_r^z = \prod_{l=j+1}^{j+r-1}\sigma_l^z$, with $\mathbb{Z}_1^z=\mathbb{I}$, with periodic boundary condition. Since this model can be solved analytically \cite{lieb1961, barouch_pra_1970_1, barouch_pra_1970_2}, \(\langle P \rangle_{\max}^{\omega}\) can be computed for large system sizes. We now examine how the scaling and other characteristics of \(\langle P \rangle _{\max}^{\omega}\) get affected as \(\alpha\) increases, i.e., as the strength of long-range interactions decreases. 
 When \(\alpha \le 2\) and \(\gamma=-1\), numerical simulations reveal that the charging Hamiltonian in Eqs. (\ref{eq:H_int}) and (\ref{eq:H_extXY}) continue to exhibit super-extensive power scaling in the paramagnetic phase (\(h_z/J \ll 1\)).
 
 \textit{Scaling exponent for LR \(XY\) model} The scaling exponent \(\eta\) increases as \(\alpha\) decreases, reaching its maximum as \(\alpha \to 0\) (as illustrated in Fig. \ref{fig:lr_MaxPvN_scaling}). This clearly demonstrates that long-range interactions, both in the truly long-range (LR) and quasi-long-range (quasi-LR) regimes, combined with periodic driving, lead to a significant enhancement in the scaling exponent of power (ranging from \(\sim 1.68\) for \(\alpha =0.5\) to \(\sim 1.08\) with \(\alpha =2\)) compared to the short-range interactions in the charging Hamiltonian. This suggests that the observed super-linear scaling is possibly linked to the ability of the driving Hamiltonian to efficiently distribute entanglement across sites \cite{ganeshPLALR}. Furthermore, when \(\alpha \gg 2\), even in the presence of Floquet driving, the power only has a linear dependence on \(N\). For example, at \(\alpha = 5\) and \(h_z/J = 0.1\), the maximum power follows the scaling relation, \(\langle P \rangle _{\max}^{\omega} \sim a N + b,\) with \(a = 0.29\) and \(b = 0.15\). \\
 \textit{Work and power of extended \(XY\) model.} When the initial state is chosen to be \(\ket{0_k}\) in the momentum space, i.e., the ground state of \(H_B\), the work output can be obtained analytically as  \(W(nT)=\sum_{k=1}^{N/2}2h_z(1-(n_k^z)^2)\sin^2(n\cos^{-1}u_k^0)\) at \(n^{th}\) stroboscopic time, where \(n_k^z\) and \(u_k^0\) are complex functions of \(\alpha, T\) and parameters of the extended \(XY\) Hamiltonian (see Appendix \ref{sec:extendedXY} for further details). It clearly indicates that \(W(nT)\) depends non-linearly on the frequency of the Floquet drive and by calibrating \(\omega, \alpha \text{ and } h_z/J\), we again obtain \(\langle P \rangle _{\max}^{\omega} \sim aN^\eta +b\) with \(\eta>1\) (see Fig. \ref{fig:extended_pmaxom}). Unlike the LR \(XY\) model, non-linear dependence of power on \(N\) cannot be found in the quasi-LR regimes of this model.

\begin{figure}\includegraphics[width=0.9\linewidth]{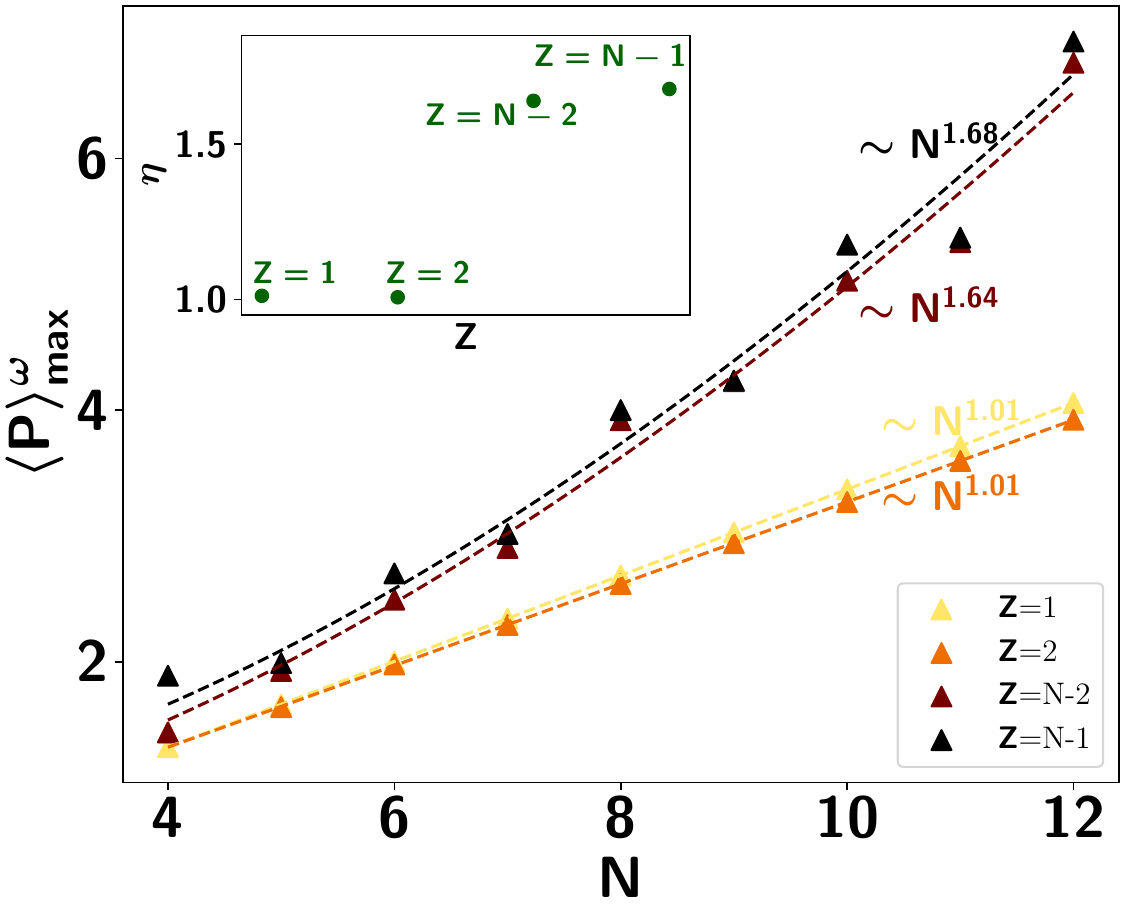}
    \caption{\textbf{Scaling of $\langle P \rangle _{\max}^{\omega}$ (ordinate) vs system-size \(N\) (abscissa) with \(\gamma = -1\) and \(\alpha=0.5\).} Dependency upon \(N\) is displayed for different coordination number \(Z\) in the long-range \(XY\) model. Other system parameters are \(J = 5\) and \(h_z = 0.5\). Inset displays \(\eta\) (vertical axis) with respect to \(Z\) (horizontal axis). It clearly indicates the influence of \(Z\) on achieving super-extensive scaling in \(\langle P \rangle _{\max}^{\omega}\) as \(\eta\) increases with \(Z\).  All the axes are dimensionless.}
    \label{fig:lr_PvN_Zvar}
\end{figure}

\subsection{ Response of coordination number in the scaling}
Let us now ask the following question --  {\it ``how does this super-linear scaling depend on the range of interactions in the long-range domain?''} The scaling of \(\langle P \rangle _{\max}^{\omega}\) changes drastically with the co-ordination number \(Z\) which as well depends upon the long-range inetraction strength, \(\alpha<1\) and \(J\gg h_z\). For a fixed \(\alpha\), the scaling   significantly gets improved with increment of \(Z\) (see Fig. \ref{fig:lr_PvN_Zvar}). This is possibly due to the fact that  strong long-range interaction is capable of creating high entanglement among all the sites, leading to a greater quantum advantage in QB in comparison with the nearest-neighbor or few neighbor interacting charger. 
The contrasting behavior is evident from investigating two extreme cases: \(Z=2\) gives linear scaling, i.e., \(\eta =1\) while \(Z = N-1\), \(\eta =1.68\) with \(\alpha \leq 2\) (see Fig. \ref{fig:lr_PvN_Zvar})
that with the increase of \(Z\), the scaling exponent, \(\eta\),  increases. 
Moreover, the increment of \(\langle P \rangle _{\max}^{\omega}\) with \(Z\) also occurs due to the contribution of higher order \(k\)-body terms in the computation of power as in Eq. (\ref{eq:lmg_bound}).

\section{Conclusion}  
\label{sec:conclu}

In this study, we explored the role of Floquet evolution in enhancing the charging process of quantum batteries (QB), focusing on the instantaneous and maximum average power. We demonstrated that by appropriately tuning the Floquet frequency, quantum advantage can be achieved when the charging is governed by a long-range interaction Hamiltonian. Specifically, when charging is governed by a long-range interacting model, a clear quantum advantage emerges in the deep long-range regime which we established by analytically proving the upper bound on the instantaneous power and by analyzing the the maximum average power. In this regime, the power scales  super-linearly with system-size. In particular, the scaling behavior depends on both the parameters of the charging Hamiltonian and the strength of the long-range interactions, characterized by the fall-off rate.  We find that as the fall-off rate  increases, the scaling decreases, suggesting that weaker long-range interactions diminish the quantum advantage. Our analysis shows that  the impact of fall-off rate and the range of interactions in the scaling analysis of power is not at all straightforward. Specifically, the effect of increasing long-range interactions on the charging  can be either beneficial or detrimental, depending on the specific region of  the parameters under consideration. This non-monotonic behavior reflects the intricate interplay between the interaction strength, the Floquet frequency, and the underlying Hamiltonian structure.

The results also reveal the potential of periodic driving and long-range interactions as crucial tools in the development of effective quantum energy storage systems and quantum technologies, in general, emphasizing the delicate balance that must be found between system parameters to attain maximum performance.

\acknowledgments
 
 We  acknowledge the use of \href{https://github.com/titaschanda/QIClib}{QIClib} -- a modern C++ library for general purpose quantum information processing and quantum computing (\url{https://titaschanda.github.io/QIClib}), and the cluster computing facility at the Harish-Chandra Research Institute. This research was supported in part by the INFOSYS scholarship for senior students. LGCL received funds from project DYNAMITE QUANTERA2-00056 funded by the Ministry of University and Research through the ERANET COFUND QuantERA II – 2021 call and co-funded by the European Union (H2020, GA No 101017733).  Funded by the European Union. Views and opinions expressed are however those of the author(s) only and do not necessarily reflect those of the European Union or the European Commission. Neither the European Union nor the granting authority can be held responsible for them. This work was supported by the Provincia Autonoma di Trento, and Q@TN, the joint lab between University of Trento, FBK—Fondazione Bruno Kessler, INFN—National Institute for Nuclear Physics, and CNR—National Research Council.

\appendix

\section{Detailed description of quantum battery and the performance quantifiers}
\label{sec:qBatteryset}



{\it Quantum battery.} We prepare the initial state of the quantum battery as a thermal equilibrium state, \(\rho(0) = \frac{\exp(-\beta H_B)}{\text{Tr}(\exp(-\beta H_B))}\) where \(\beta = 1/k_B T\) with \(k_B\) being the Boltzmann constant and \(T\) representing the temperature.  We choose the battery Hamiltonian to be
  \(  H_B = h_z\sum_j \sigma^z_j\),
where \(h_z\) is the strength of the local magnetic field, quantifying the local energy gap of each subsystem and \(\sigma^z\) is the \(z\)-component of the Pauli matrix.  Note that when \(\beta\to\infty\),  \(\rho(0)\) reaches to the ground state of  \(H_B\), i.e.,  \(\ket{\psi(t=0)}=\ket{00\ldots 0}\) with \(\ket{0}\) being the ground state of \(\sigma^z\).

{\it Charging the battery.} In order to obtain super-linear scaling of QB,  we will demonstrate that the charging operation plays an important role.  We incorporate   {\it two} important components in the charging  Hamiltonian of the QB, \(H_{ch}(t)=H_B+H_{int}(t)\), which are different from the previous protocols known in the literature \cite{bera2019,floquet_battery}. These two crucial ingredients are as follows: \\
(1a) We choose variable-range interacting Hamiltonian  as  \(H_{int}\), given by 
\begin{eqnarray}
     H_{int}^{LR} = \sum_{\substack{i < j \\ |i - j| \leq Z}}^{N-Z}{\frac{J(t)}{\mathcal{N}|i-j|^{\alpha}}(\sigma_{i}^{x}\sigma_{j}^{x}} + \gamma\sigma_{i}^{y}\sigma_{j}^{y}),
     \label{eq:H_int}
\end{eqnarray}
which is responsible to build  a multi-site correlation between different subsystems of the QB. Here \(J(t)\) is the time-dependent interaction strength between the spins at site, \(i\) and  \(j\) with \(\mathcal{N}= \sum_{r=1}^{Z} \frac{1}{r^\alpha}, \, r = |i-j|\), known as the Kac normalization factor \cite{Kac_jmp_1963},  \(\gamma\) and \(Z\) represent the anisotropic factor and the coordination number, i.e., the distance between sites, \(i\) and \(j\) respectively and \(\sigma^k\) (\(k=x,y,z\)) are the Pauli matrices. We also assume a power-law functional form for the decay of the interactions with the increasing distance between the spins such that \(\alpha\) corresponds to the fall-off rate of this power law-decay.  In this case, an open-boundary is considered.  By changing  \(\alpha\) values, we can have  long-range interactions  with  \(0 \leq \alpha \leq 1\), quasi LR interactions for  \(1 < \alpha \leq 2\) and short-range interaction when \(\alpha >2\).  Note that \(\alpha =0\) corresponds to the LMG model \cite{LMGmodel}. In this work,  we study the gain in QB by varying both \(\alpha\) and \(Z\). \\
(1b) Another model that we choose for charging is the  extended \(XY\) model 
which can be solved analytically by Jordan-Wigner transformation \cite{lieb1961, barouch_pra_1970_1, barouch_pra_1970_2}. In this case,  the interacting Hamiltonian reads as
\begin{equation}
    H_{int}^{exXY}=\sum_{j=1}^{N} \sum_{r=1}^{\frac{N}{2}}\frac{J(t)}{4\mathcal{N}r^\alpha} (\sigma_j^x\mathbb{Z}_r^z\sigma_{j+r}^x+\gamma \sigma_j^y\mathbb{Z}_r^z\sigma_{j+r}^y),
    \label{eq:LRextXY}
\end{equation}
where $\mathbb{Z}_r^z = \prod_{l=j+1}^{j+r-1}\sigma_l^z$, with $\mathbb{Z}_1^z=\mathbb{I}$, with \(\alpha\) being the strength of power-law decay 
and the Kac-scaling factor respectively as given in Eq. (\ref{eq:H_int}).  Here a periodic boundary condition is considered.
We are interested to find out whether the extended \(XY\) model in Eq. (\ref{eq:LRextXY})  involving both \(N\)-body interactions and long-range interactions can provide similar or any additional  benefit compared to the
long-range models in Eq. (\ref{eq:H_int}).

 (2) We consider the evolution of QB through square wave with time period, \(T=\frac{2\pi}{\omega}\), given as
\begin{equation}
 J(t) = \begin{cases} 
      +J ; & nT < t < (n+1/2)T \\
      -J ; & (n+1/2)T < t < nT,  \\ 
   \end{cases}
   \label{eq:sqwmod_NNN}
\end{equation}
where \(\omega\) represents the frequency of the periodic driving. Given the square wave form of the periodic drive performed at stroboscopic times, we use the unitary of the form
\begin{gather}
    H_1 = H_B+H_{int}, \, H_2 = H_B-H_{int},\nonumber\\
    U_1 = \exp[-i (H_B+H_{int}) \frac{T}{2}], \,
    U_2 = \exp[-i (H_B-H_{int}) \frac{T}{2}],\nonumber\\
    U^F(T) = U_2U_1,
    \label{eq:U_num}
\end{gather}
where \(U^F\) is unitary corresponding to the periodic time, \(T\). The \(n^{th}\) stroboscopic evolution from the initial state, \(\ket{\psi(t=0)}\), is given as \(\ket{\psi(t=nT)}=(U^F)^n\ket{\psi(t=0)}\).

{\it Performance quantifier.} In order to certify the performance of the battery with respect to its capability in storing and extractable energy, we compute the work stored in a given battery at each stroboscopic time, \(nT\), as \(W(t=nT) = \Tr(H_B\rho(t)) - \Tr(H_B \rho(0))\), where \(\rho(0)\)  and \(\rho(t)\) are  the initial and  the evolved state of the battery. Since changing certain parameters of the Hamiltonian can cause extraction of more power, making the design unreasonable,  we normalize the Hamiltonian. It makes its spectrum to be bounded by \([~-1, 1]\), irrespective of any system parameters which is given as
 \( (E_{\max}-E_{\min})^{-1}[2H_B-(E_{\max}+E_{\min})\mathbb{I}]\rightarrow H_B\).

\begin{figure}    \includegraphics[width=\linewidth]{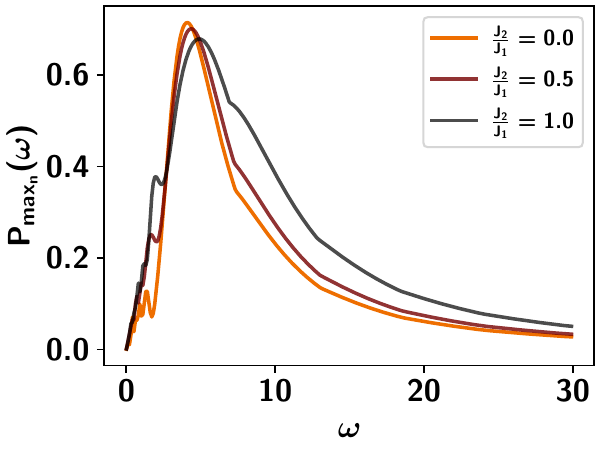}
    \caption{Behavior of \(P_{\max_n} (\omega)\) (ordinate)  against \(\omega\) (abscissa) of the \(XY\) model with the increase of  \(J_2/J_1\) (from dark to light lines).  Here \(h_z = 0.5\),    \(\gamma = -1\) and \(N = 8\). The initial state is the product ground state of the battery Hamiltonian, \(H_B= h_z \sum_j \sigma^z_j\) (see Eq. (\ref{eq:H_J1_J2})).  We observe that there exists \(\omega\) values for which \(P_{\max_n} (\omega)\) attain its maximum.   All the axes are dimensionless.}
    \label{fig:NNN_MaxPvsomega_merged}
\end{figure}

We are interested to investigate two quantities to assess the performance of the battery.\\
\textbf{(1) Instantaneous power. } The instantaneous power of the battery is defined as
\begin{equation}
     P_{ins}(t) = \frac{d}{dt}\text{Tr}(H_B\rho(t)-H_B\rho(0))
\end{equation}
which reduces to \(P_{ins}(t)=\frac{d}{dt}\text{Tr}(H_B\rho(t))\) as the initial state and the Hamiltonian is time independent. Using the Liouville von-Neumann equation at stroboscopic times \(\frac{d}{dt}\rho(t)|_{t=nT} = -i[H_{ch}^F,\rho(nT)]\), we obtain \(|P_{ins}(t)|_{t=nT} = |\text{Tr}(H_B[H_{ch}^F,\rho(nT)])|\).\\\\
\textbf{(2) Maximum average power. } One can tune the frequency of the Floquet driving. More precisely, when \(\omega\to 0\),  the system evolves unitarily and the time period is very high for which power through stroboscopic time become very small, i.e., maximization of power over stroboscopic time, \(\langle P \rangle_{\max_{n}}(\omega)=\underset{n}{\max}\frac{W(nT)}{nT}\), \(\langle P \rangle_{\max_{n}}(\omega)\to 0\) while for \(\omega\to \infty\), QB evolves through an average Hamiltonian \(H_B\) resulting in \(\langle P \rangle_{\max_{n}} (\omega)\to 0\) (See Fig. \ref{fig:NNN_MaxPvsomega_merged}). Such a scenario encourages us to define maximum power through maximization over frequency of the Floquet driving,  which is defined as 
\begin{equation}
     \langle P \rangle _{\max}^{\omega} = \max_{\substack{n,\omega}} \frac{W(nT)}{nT},
\end{equation}
where the maximization is performed over stroboscopic time, \(t\) and the frequency range, \(\omega\). Note that the optimization over \(\omega\)  does not appear in the unitary evolution and is not  considered for  Floquet charging in literature (see Ref.  \cite{floquet_battery}).  Therefore, it highlights  that non vanishing  maximum power  depends upon \(\omega\) and genuine quantum advantage through Floquet charging can only be confirmed through optimization over frequency as well as stroboscopic time. In other words, we are interested to identify the favorable situation in which quantum advantage can be maximized as done in case of other quantum tasks  \cite{nielsenchuang2000, Giolloydmaccone06, de2011quantumadvantageqcom, Gisin2007}. Note here that the normalization of the battery Hamiltonian $H_B$ described above is not performed when we  compute the scaling behavior of \( \langle P\rangle_{\max}^{\omega}\) with the system-size \(N\) since we want to compare our results with the known results in literature computed without normalization.

\section{Super-linear scaling with long-range interacting charging Hamiltonian via Floquet driving}
\label{sec:analyticalsuperlinear}

 In order to provide the signature of super-extensive scaling of maximum average power, we now investigate how the upper bound of the intantaneous power scales with system size. In particular, the bound on \(|P_{ins}(t)|_{t=nT}\) is given as 

 \begin{widetext}
     \begin{eqnarray}
    |P_{ins}(t)|_{t=nT} &=& |\text{Tr}(H_B[H_{ch}^F,\rho(nT)])|= |\text{Tr}(\rho(nT)[H_{ch}^F,H_B])|=|\bra{\psi(nT)}[H_{ch}^F,H_B]\ket{\psi(nT)}|\nonumber \\
    &\leq& \max_{\substack{n}}|\{\bra{\psi(nT)}[H_{ch}^F,H_B]\ket{\psi(nT)}\}|\le ||[H_{ch}^F,H_B]||.
    \label{eq:P_inst_max}
\end{eqnarray}
Now the bound on average power is given as 
\begin{eqnarray*}
    |\expval{P(nT)}|&=& \frac{1}{nT}\Bigg |\int^{nT}_0 \frac{d}{dt}E_B(t) dt\Bigg |\leq \frac{1}{nT}\int^{nT}_0 \Big  |\frac{d}{dt}E_B(t) \Big | dt= \frac{1}{T}\int^{T}_0 \Big |\text{Tr}(H_B[H_{ch}^F,\rho(nT)]) \Big | dt \\&=& \leq \frac{1}{T}\int^{T}_0 \max_n \Big |P_{ins}(nT) \Big | dt \nonumber=  \max_n \Big |P_{ins}(nT) \Big|.\nonumber
\end{eqnarray*}
 \end{widetext}

Hence we utilize the bound on instantaneous power to provide the evidence of quantum advantage in this system. Also, for convenience we will henceforth denote \(P_{ins}(t)\) by \(P(t)\).

\subsection{Detailed proof of Theorem 1}
\label{subsec:analyProof}
\begin{widetext}
Let us consider the aforementioned battery Hamiltonian \(H_B\) charged using the \(H_{ch}=H_B + H_{int}^{LR}\) (Eq. (\ref{eq:H_int})).
Following the results in ref. \cite{dario21} 
\begin{eqnarray}
        |P(t)| &\leq& \sum_{k=1}^{N} k ||H_{ch}||\,||H_s-E_{s_{\min}}||\nonumber,\\
        |P(nT)| &\le& \sum_{k=1}^{N} k ||H_{ch}^F||\,||H_s-E_{s_{\min}}||,
        \label{eq:P_inst_bound}
\end{eqnarray}
where \(H_{ch}^F\) is the time-independent Floquet Hamiltonian. In our case, \(||H_s-E_{s)\min}||\) is constant. The Floquet Hamiltonian is calculated using the Floquet Magnus expansion (FME) and the \(k^{th}\) term of this FME obeys the following inequality,
   \begin{equation*}
    H_{ch}^k\leq\frac{1}{2^{k}T}\int_{0}^{T}dt_1\int_{0}^{t_1}dt_2\int_{0}^{t_2}dt_3\ldots\int_{0}^{t_{k-1}}dt_k[\pm H_{p_1},[\pm H_{p_2},[\pm H_{p_3},\ldots\, k-1\text{ terms }\ldots[\pm H_{p_k},\pm H_{p_{k+1}}]\ldots]]]
\end{equation*}
where \(H_{p_m}\,\in\,\{H_1,H_2,H_1+H_2,H_1-H_2\}\); \(m=1,2,3 \ldots k+1\) which are all time-independent. This factor of \(2^k\) appears because, as mentioned below, the commutator term can expand into \( \sim 2^k\) number of non vanishing terms. If the commutator terms are not restricted by at least the factor, it can cause the series to inherently diverge. Using Eq. (\ref{eq:U_num}), we express the commutator in terms of \(H_B \text{ and } H_{int}\) which modifies  Eq. (\ref{eq:H_F_k_og}) as
\begin{equation}
        H_{ch}^k\leq\frac{1}{2^{k}T}\int_{0}^{T}dt_1\int_{0}^{t_1}dt_2\int_{0}^{t_2}dt_3\ldots\int_{0}^{t_{k-1}}dt_k[H_B\pm H_{int},[H_B\pm H_{int},[H_B\pm H_{int},\ldots\, k-1\text{ terms }\ldots[H_B\pm H_{int}, H_B\pm H_{int}]\ldots]]]
        \label{eq:H_F_k_og}
\end{equation}
Using the fact that the commutator in the \(k^{th}\) term of FME is time-independent and the Floquet driving has square wave pattern, we can write the \(k^{th}\) term as
\begin{align*}
    I^k &= \int_{0}^{T}dt_1\int_{0}^{t_1}dt_2\int_{0}^{t_2}dt_3\int_{0}^{t_3}dt_4\,....\int_{0}^{t_{k-2}}dt_{k-1}\int_{0}^{t_{k-1}}dt_k=\sum_{l=1}^{k-1}I_l^k,
\end{align*}
where \(I_l^k\) is given as
\begin{equation*}
        I_l^k=\left(\int_{\frac{T}{2}}^{T}dt_1\int_{\frac{T}{2}}^{t_1}dt_2\int_{\frac{T}{2}}^{t_2}dt_3\int_{\frac{T}{2}}^{t_3}dt_4....\int_{\frac{T}{2}}^{t_l}dt_{l+1}\right)\left(\int_{0}^{\frac{T}{2}}dt_{l+2}\int_{0}^{\frac{T}{2}}dt_{l+3}\int_{0}^{\frac{T}{2}}dt_{l+4}....\int_{0}^{\frac{T}{2}}dt_{k}\right).
\end{equation*}
Now applying change of variables, \(u_i = t_i-\frac{T}{2}\) in the above integral, we obtain
\begin{eqnarray*}
    I_l^k&=&\left(\int_{0}^{\frac{T}{2}}du_1\int_{0}^{u_1}du_2\int_{0}^{u_2}du_3\int_{0}^{u_3}du_4....\int_{0}^{u_l}du_{l+1}\right)\left(\int_{0}^{\frac{T}{2}}dt_{l+2}\int_{0}^{\frac{T}{2}}dt_{l+3}\int_{0}^{\frac{T}{2}}dt_{l+4}....\int_{0}^{\frac{T}{2}}dt_{k}\right)\\
    &=&\left(\frac{(\frac{T}{2})^{l+1}}{(l+1)!}\right)\left(\frac{(\frac{T}{2})^{k-l-1}}{(k-l-1)!}\right)=\frac{(\frac{T}{2})^{k}}{k!}\binom{k}{l},
\end{eqnarray*}
which provides the coeffiecent of the \(k^{th}\) FME, given as 
\begin{eqnarray*}
    I^k &=& \sum_{l=1}^{k-1}I_l^k\leq \sum_{l=0}^{k}I_l = \sum_{l=0}^{k}\frac{(\frac{T}{2})^{k}}{k!}\binom{k}{l}=\frac{T^{k}}{k!}.
\end{eqnarray*}
Hence, Eq. (\ref{eq:H_F_k_og}) simplifies to
\begin{equation}
     H_{ch}^k\leq\frac{T^{k-1}}{2^{k}k!}[H_B\pm H_{int},[H_B\pm H_{int},[H_B\pm H_{int},\ldots\, k-1\text{ terms }\ldots[H_B\pm H_{int}, H_B\pm H_{int}]\ldots]]].
     \label{eq:H_F_k_init}
\end{equation}
After expanding the commutators in Eq. (\ref{eq:H_F_k_init}), we obtain \(2^{k-1}\) number of individual commutators of the form
\begin{equation}
    \pm2[H_{i_1},[H_{i_2},[H_{i_3},[H_{i_4}, \ldots,[H_{i_{k-1}},[H_B,H_{int}]]\ldots]]]]
    \label{eq:H_F_k_comm_gen}
\end{equation}
such that \(H_{i_m}\,\in\,\{H_B,H_{int}\}; \, m=1,2,3...k-1\). Referring to Eq. (\ref{eq:P_inst_bound}), we consider the operator norm \(||H_{ch}^F||\) and invoke triangle's inequality for the commutator in Eq. (\ref{eq:H_F_k_init}) expanding into \(2^{k-1}\) possible terms,
\begin{eqnarray*}
    ||[H_B\pm H_{int},[H_B\pm H_{int},[H_B\pm H_{int},\ldots\, k-1\text{ terms }\ldots[H_B\pm H_{int}, H_B\pm H_{int}]\ldots]]]||\\
    \leq \sum_{i_1}\sum_{i_2}\ldots\sum_{i_k-1}2||\,[H_{i_1},[H_{i_2},[H_{i_3},[H_{i_4}, \ldots,[H_{i_{k-1}},[H_B,H_{int}]]\ldots]]]]\,||.
\end{eqnarray*}
This gives the following bound for \(||H_{ch}^k||\):
\begin{equation}
    ||H_{ch}^k||\leq\frac{T^{k-1}}{2^{k-1}k!}\sum_{i_1}\sum_{i_2}\ldots\sum_{i_k-1}||\,[H_{i_1},[H_{i_2},[H_{i_3},[H_{i_4}, \ldots,[H_{i_{k-1}},[H_B,H_{int}]]\ldots]]]]\,||.
    \label{eq:H_F_k_final}
\end{equation}
Owing to the nature of the non-interacting battery Hamiltonian and the two-body interacting structure of charging LMG Hamiltonian, the commutator in Eq. (\ref{eq:H_F_k_comm_gen}) holds the property such that if there are '\(l\)' number of \(H_{int}\) terms out of the \(k-1\) terms in the commutator, it corresponds to \(l+2\) body interaction term. Hence, the \(k^{th}\) term of the Floquet Hamiltonian contains atmost a \((k+1)-\)body interaction. The FME contains infinitely many terms, however, this property is valid for \(k = N-1\) after which all the terms contain an \(N-\)body interaction. So accordingly, we divide the FME into 2 regimes, \(k\leq N-1\) and \(k \geq N\). For the above mentioned two regimes, under the assumption \(h_z=1\), \(J=1\) and \(\gamma=-1\),  we provide an upper bound for the instantaneous power at stroboscopic times which are as follows:

{\it (A)} \(\mathbf{k\leq N-1}\). While maintaining the extensivity of each individual commutator, the contribution of the \(k^{th}\) term of the FME in instantaneous power is given as
\begin{eqnarray}
    |P(nT)|=\sum_{k=1}^{N-1}|P(nT)^{(k)}|&\leq& \sum_{k=1}^{N-1}k||H_{ch}^k|| \nonumber\\
    &\leq&\sum_{k=1}^{N-1}\frac{T^{k-1}}{2^{k-1}k!}\left( 2\binom{k-1}{0}N+3\binom{k-1}{1}N+4\binom{k-1}{2}N+.....+(k+1)\binom{k-1}{k-1}N\right)
    \label{eq:P_dep_Hint_choose_k} \nonumber\\
    &=&N\sum_{k=1}^{N-1}\left(\frac{T^{k-1}}{k!}\frac{(k+3)}{2}\right),
\end{eqnarray}
where we use the fact that \(\binom{k-1}{l}\) term in Eq. (\ref{eq:P_dep_Hint_choose_k}) corresponds to the case of considering a commutator (out of the \(2^{k-1}\) possibilities) by placing \(l\) number of \(H_{int}\) out of the \(k-1\) possible places in Eq. (\ref{eq:H_F_k_comm_gen}). This commutator accordingly provides an \((l+2)\)-body factor in instantaneous power which is expressed in Eq. (\ref{eq:P_dep_Hint_choose_k}) with an extensivity contribution of factor \(N\).

{\it (B)} \(\mathbf{k\geq N}\). A similar argument is presented here as well. Here however, one caveat exists, i.e., a \(k\)-body interaction term is obtained till \(k=N-1\), after which only \(N\)-body interaction terms appear
\begin{align}
    |P(nT)|=\sum_{k=N}^{\infty}|P(nT)^{(k)}|&\leq\nonumber \sum_{k=1}^{N-1}k||H_{ch}^k||\\&\leq\nonumber\sum_{k=N}^{\infty}\frac{T^{k-1}}{2^{k-1}k!}N\left(2\binom{k-1}{0}+3\binom{k-1}{1}+4\binom{k-1}{2}+.....+N\binom{k-1}{N-2}\right)\\&\quad\quad\quad\quad+\nonumber N\left(N\binom{k-1}{N-1}+N\binom{k-1}{N-2}+N\binom{k-1}{N-3}+.....+N\binom{k-1}{k-1}\right)\\
    &=\nonumber\sum_{k=N}^{\infty} \frac{T^{k-1}}{2^{k-1}k!} \left[N\left(\sum_{i=0}^{N-2}(i+2)\binom{k-1}{i}\right) + N^2\left(\sum_{i=N-1}^{k-1} \binom{k-1}{i}\right) \right]\\
    &=\nonumber N\left[\sum_{k=N}^\infty \frac{T^{k-1}}{k!} \frac{\sum_{i=0}^{N-2}(i+2)\binom{k-1}{i}}{2^{k-1}}\right] + N^2\left[\sum_{k=N}^\infty \frac{T^{k-1}}{k!}\frac{\sum_{i=N-1}^{k-1} \binom{k-1}{i}}{2^{k-1}} \right]\\&= N\left[\sum_{k=0}^\infty \frac{T^{k+N-1}}{(k+N)!} \frac{\sum_{i=0}^{N-2}(i+2)\binom{k+N-1}{i}}{2^{k+N-1}}\right] + N^2\left[\sum_{k=0}^\infty \frac{T^{k+N-1}}{(k+N)!}\frac{\sum_{i=N-1}^{k+N-1} \binom{k+N-1}{i}}{2^{k+N-1}} \right].
\end{align}
In the last line, we change the variable \(k\) with \(k+N\) which simplifies the equation. Now, one can follow that the second term in the above equation is responsible for quadratic scaling and it provides the evidence for the non-linear scaling advantage in the long-range system. Hence the bound on instantaneous power reads
\begin{eqnarray*}
    |P(nT)|&\leq& N\left[\sum_{k=1}^{N-1}\frac{T^{k-1}}{k!}\frac{(k+3)}{2} + \sum_{k=0}^\infty \frac{T^{k+N-1}}{(k+N)!} \frac{\sum_{i=0}^{N-2}(i+2)\binom{k+N-1}{i}}{2^{k+N-1}}\right] + N^2\left[\sum_{k=0}^\infty \frac{T^{k+N-1}}{(k+N)!}\frac{\sum_{i=N-1}^{k+N-1} \binom{k+N-1}{i}}{2^{k+N-1}} \right].
\end{eqnarray*} 
\end{widetext}

{\it Note}. One of the key concerns regarding this advantage is whether it remains achievable in the thermodynamic limit. Here, we demonstrate that the  quantum advantage diminishes as the system size increases. In the second term, as \( k\) increases, the sum in the numerator also grows, contributing a nonzero effect that helps to sustain the super-extensive scaling. However, with increasing system size, the factorial term \((k+N)!\) in the denominator becomes dominant, causing the quantum advantages to wash away. This result is intuitive, as the terms corresponding to \( k \geq N \) decrease with increasing \( N \), making the linear scaling term dominant. Our analysis indicates that quantum advantages emerge only at finite system sizes which is moderately high and importantly can be experimentally achieved.\\

\begin{figure}
    \centering
    \includegraphics[width=\linewidth]{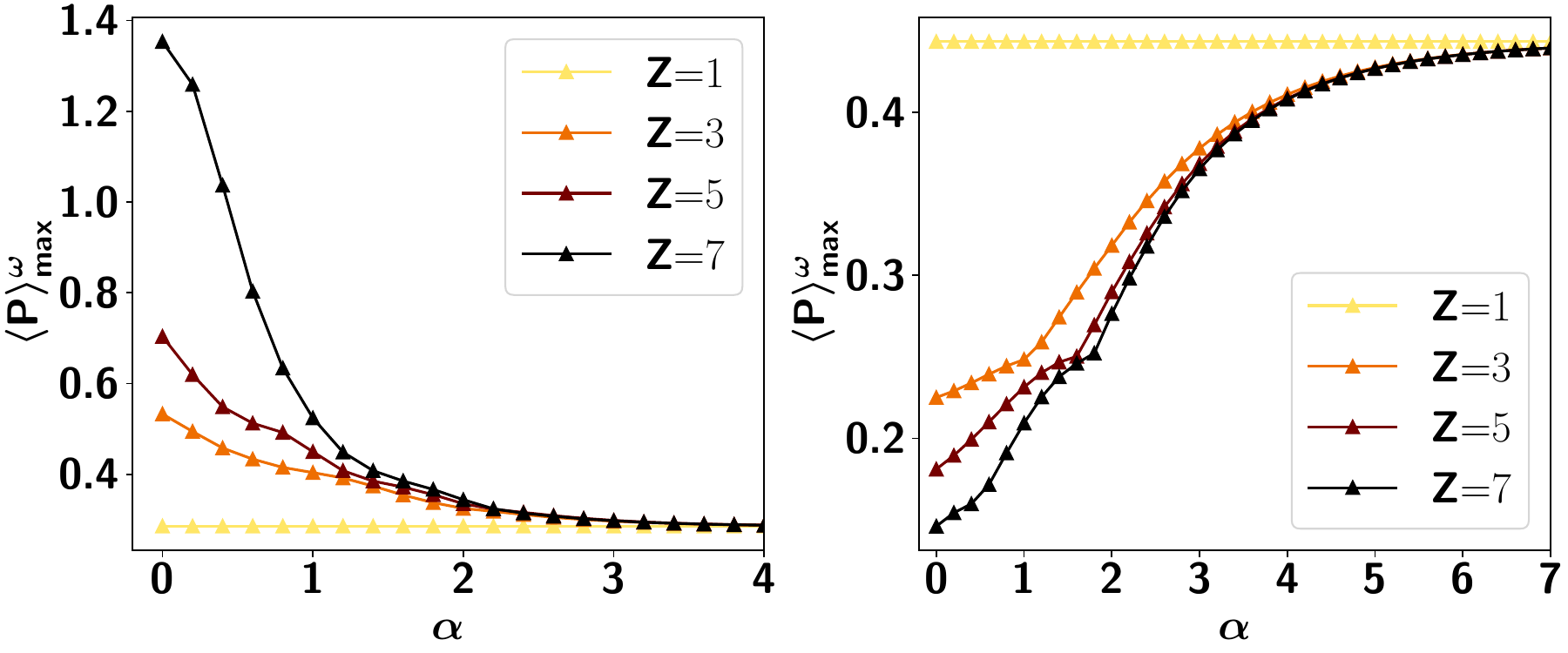}
    \caption{ $\langle P \rangle_{\max}^{\omega}$\textbf{(ordinate) vs fall-off rate \(\alpha\) (abscissa)} with \(J=20\), \(h_z=0.5\) and \(N=8\) \textbf{(left)} and \(J=0.2\), \(h_z=0.5\) and \(N=8\) \textbf{(right)}. Here, the energy spectrum of the battery Hamiltonian is normalized to be bounded by \([-1,1]\). Clearly, low \(\alpha\) with high \(Z\) provides high \(\langle P \rangle _{\max}^{\omega}\) for \(\frac{h_z}{J}\ll 1\) and high \(\alpha\) with low \(Z\) provides high \(\langle P \rangle _{\max}^{\omega}\) for \(\frac{h_z}{J}> 1\). All the axes are dimensionless.}
    \label{fig:lr_MaxPvAlpha}
\end{figure}

\begin{figure}
    \centering
    \includegraphics[width=\linewidth]{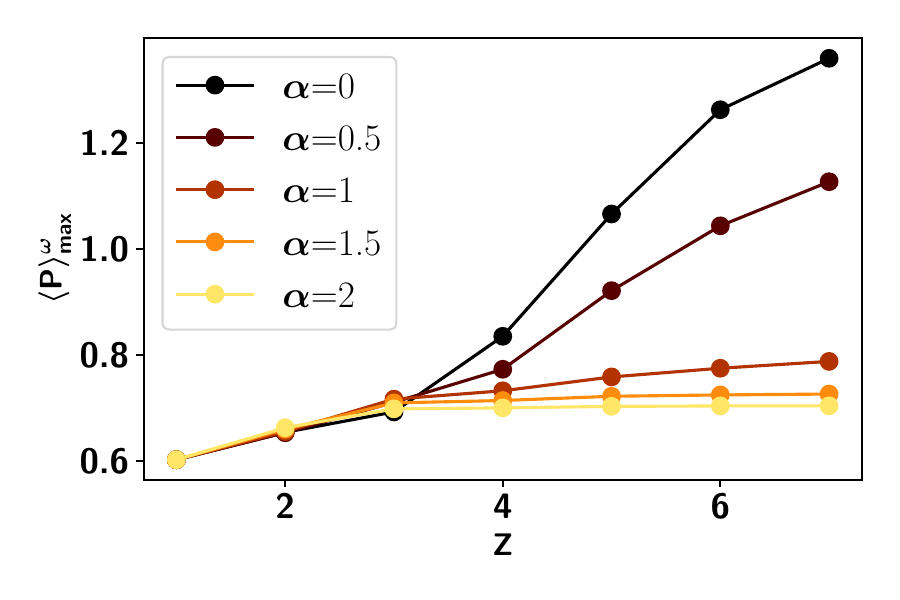}    
    \caption{ \textbf{$\langle P \rangle_{\max}^{\omega}$ (ordinate) vs coordination number \(Z\) (abscissa) with \(\gamma = -1\).} Dependency upon \(Z\) is displayed for different long-range interaction strength, \(\alpha\),  in the long-range \(XY\) model in Eq. (\ref{eq:H_int}). Other system parameters are \(J = 20\), \(h_z = 0.5\) and \( N=8\). Clearly, low \(\alpha\) and high \(Z\) provide high \(\langle P \rangle _{\max}^{\omega}\), thereby highlighting the beneficial role of LR interactions. All the axes are dimensionless.}
    \label{fig:lr_PvZ_J5}
\end{figure}

\subsection{Response of power with variable range interactions}

For physical systems like ion traps \cite{Eckardt2017RMP,iontrapp_floquet1, iontrapp_floquet2, iontrapp_floquet3}, the variable-range interactions appear naturally in the system. We use two parameters to vary the range of interactions in order to analyze the characteristic response of power calculated for a normalized work.\\
\textbf{(A) Power-law fall-off rate coefficient,} \(\alpha\).  The contrasting behavior of \(\langle P \rangle_{\max}^{\omega}\) emerges in the \(\frac{h_z}{J} \ll 1\) (paramagnetic) and \(\frac{h_z}{J} > 1\) (ferromagnetic) regimes, as depicted in Fig. \ref{fig:lr_MaxPvAlpha}. In the paramagnetic regime, where the inter-site interactions dominate, the increment in power for the truly long-range regime (\(\alpha\leq 1\)), greatly outperforms that of the short-range (\(\alpha>2\)) and quasi long-range regimes (\(1<\alpha\leq2\)), especially for highly interconnected systems \((Z \sim N-1)\). On the other hand, short-range interactions yield higher power in the ferromagnetic regime with \(Z \sim N-1\) as compared to the paramagnetic domain. By observing the higher energy scale obtained for the paramagnetic regime, one can conclude that a genuine quantum advantage is present for truly long-range interacting charging Hamiltonian in the paramagnetic phase of the model.

\textbf{(B) Coordination number,} \(Z\).
The magnitude of \( \langle P \rangle _{\max}^{\omega}\) increases with \(Z\) for different \(\alpha\leq 2\) values belonging to LR and quasi long-range regimes and in the domain where \(J\gg h_z\).  In particular, as \(\alpha\) increases, the maximum power does not change significantly with \(Z\) while for small \(\alpha\), especially when \(\alpha <1\), \(\langle P \rangle_{\max}^{\omega}\) increases monotonically with \(Z\) (see Fig. \ref{fig:lr_PvZ_J5}). This is possibly due to the fact that  strong long-range interaction is capable of creating high entanglement among all the sites, leading to a greater quantum advantage in QB in comparison with the nearest-neighbor or few-neighbor interacting charger.

\section{Advantage of having NN and NNN interactions though no super-linear scaling}
\label{sec:advNNandNNN}

Now, let us analyze whether along with nearest-neighbor interactions, denoted by \(J_1\), if one incorporates  next-nearest neighbor interaction with \(J_2\)   in the charging Hamiltonian (i.e., \(\alpha =0\) and \(Z=2\) in Eq. (\ref{eq:H_int})), any enhancement of power can be achieved or not. The Hamiltonian can be written explicitly as \begin{eqnarray}
     H_{int}^{NNN} &=& \sum_{\substack{i = 1}}^{N-1}{\frac{J_1(t)}{2}(\sigma_{i}^{x}\sigma_{i+1}^{x}} + \gamma\sigma_{i}^{y}\sigma_{i+1}^{y})  \nonumber\\&+&  \sum_{\substack{i = 1}}^{N-2}{\frac{J_2(t)}{2}(\sigma_{i}^{x}\sigma_{i+2}^{x}} + \gamma\sigma_{i}^{y}\sigma_{i+2}^{y}),
     \label{eq:H_J1_J2}
\end{eqnarray}
To address this query, we choose two paths -- (1) we study \(\langle P \rangle _{\max_n} (\omega)\) by varying the strength of \(J_2/J_1\) for a fixed \(\omega\)  and the same with the increase of \(\omega\) for different values of \(J_2/J_1\); (2) secondly,  for a large values of \(\omega\), we apply Floquet-Magnus expansion \cite{MagnusPhysreport} and study the scaling behavior of this model with \(N\) for various  \(J_2/J_1\) strength and try to see whether we can beat linear scaling obtained for NN interacting charging.

\subsection{Gain in power  with NNN interacting charger}
We establish here that although  short-range interaction can be beneficial to enhance the power of the QB,  super-extensive scaling can only be attained with LR interactions. More precisely, we focus here on the trade-off relation between nearest-neighbor and next-nearest neighbor interaction strengths present in the charging Hamiltonian which also depends on the stroboscopic time.
We first note that power can be enhanced when charging Hamiltonian contains any kinds of interactions, thereby confirming the role of quantumness for storing energy in the battery. In case of Floquet charging, another crucial component is the frequency. To determine the role of \(\omega\) and the interaction strengths, we investigate the behavior of  \(\langle P \rangle _{\max_n} (\omega, J_2/J_1, h_z)\) where \(J_2/J_1\) and \(h_z\) are system parameters of the QB and charger respectively. Since we are interested to explore the role NNN interactions in charging, we fix \(h_z\) to be moderately low compared to the interaction strength. 


\begin{figure}
    \centering
    \includegraphics[width=\linewidth]{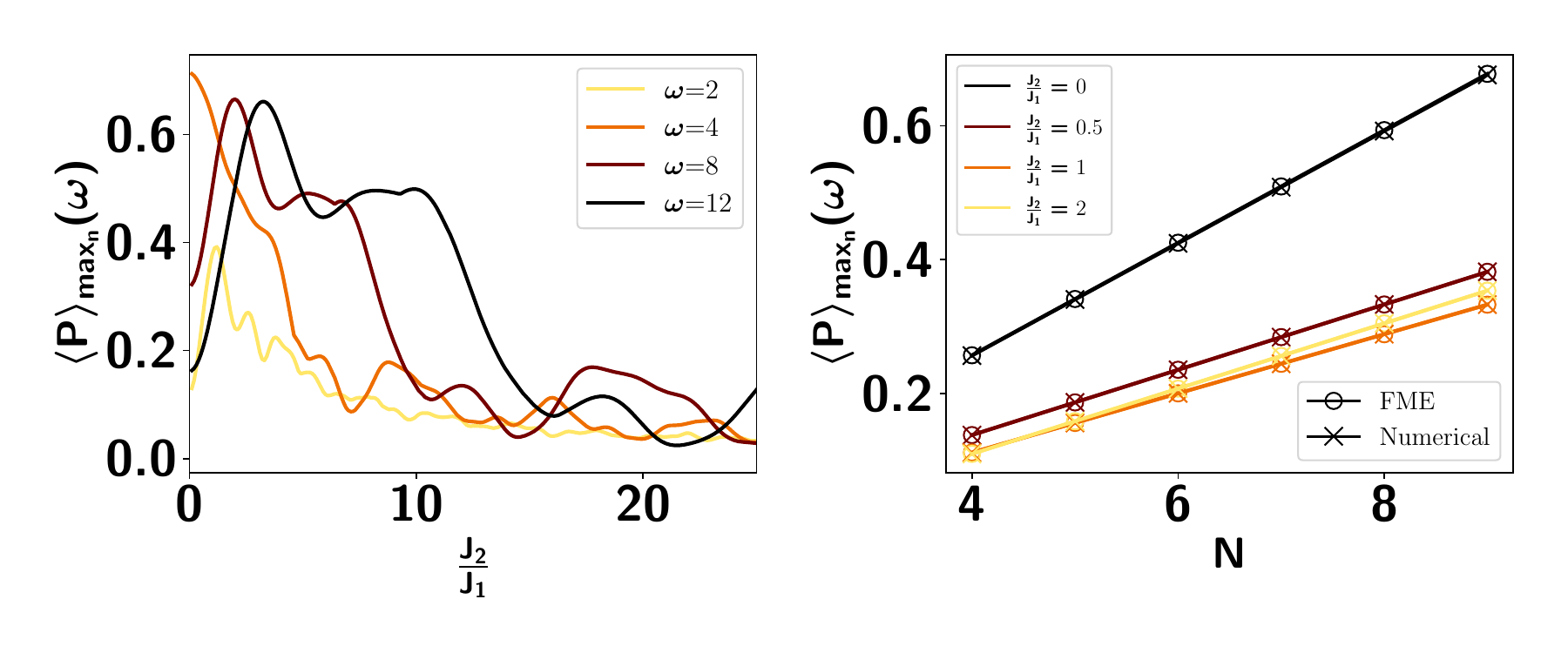}
    \caption {\textbf{(a)\(\langle P \rangle _{\max_n} (\omega)\) (ordinate)  with respect to \(\frac{J_2}{J_1}\) (abscissa) of the \(XY\) model with the increase of the frequency of the periodic driving, \(\omega\) (from light to dark lines)}.
    Clearly, there exists \(\omega_c\) above which NNN interaction combined with NN ones provides advantage over the charging Hamiltonian with only NN interactions. 
    Both the axes are dimensionless. {\bf (b) Scaling analysis for charging Hamiltonian with NN and NNN interactions.} $\langle P \rangle_{\max_{n}} (\omega)$ (vertical axis) against $N$ (horizontal axis). Here $\omega=25$ and $h_z = 0.5$. Dark to light solid lines represent  the increase of the values of \(\frac{J_2}{J_1}\). The scaling of  $\langle P \rangle _{\max_n} (\omega)$  is computed both by Floquet-Magnus expansion and by numerical methods which match exactly. It is evident from the figure that $\langle P \rangle_{\max_n} (\omega) \sim a N + b$ where \((a,b)\) depends on the \(\frac{J_2}{J_1}\) values. For example, for \(\frac{J_2}{J_1} = 0,\, (0.084,-0.079)\); \(\frac{J_2}{J_1} = 0.5,\,  (0.0487, -0.0569)\); \(\frac{J_2}{J_1}= 1,\,  (0.044,-0.065)\); and \(\frac{J_2}{J_1}=2, \,  (0.048, -0.085)\).     Both the axes are dimensionless. }
    \label{fig:NNN_MaxPvsJ2J1_merged}
\end{figure}

{\it Observation 1.} The entire profile of  \(\langle P \rangle _{\max_n} (\omega, J_2/J_1)\) depends crucially on \(\omega\) (see Fig. \ref{fig:NNN_MaxPvsJ2J1_merged}), showing the importance of frequency in the Floquet driving. It is evident from the investigation that for a fixed system parameters, the maximum of \(\langle P \rangle _{\max_n} (\omega)\) is achieved only for a single value of \(\omega\).


{\it Observation 2.}  \(\langle P \rangle _{\max_n} (\omega, J_2/J_1)\) oscillates  non-uniformly with the variation of \(J_2/J_1\) for a fixed \(\omega\) value although it saturates when \(J_2/J_1\) is moderately high. In particular,  for some values of \(\omega\), there  exist a range of \(J_2/J_1\) for which  \(\langle P \rangle _{\max_n} (\omega, J_2/J_1) \geq \langle P \rangle _{\max_n} (\omega, J_2/J_1=0) \), thereby illustrating the benefit of NNN interactions in charging. However, the increasing value of \(J_2\) over \(J_1\) is not ubiquitously beneficial as depicted in Fig. \ref{fig:NNN_MaxPvsJ2J1_merged}. It only highlights that the competition between NN and NNN interactions matters in storing energy. This observation also justifies the importance of maximizing \(\omega\) for studying the power of the QB.  

{\it Observation 3.} Let us consider the maximum stored energy with stroboscopic time, given by \(W_{\max_n} (J_2/J_1, h_z) = \underset{n}{\max}\quad W(nT)\) which reaches maximum for some \(\omega\) values when the system parameters are fixed. The  \(\omega\) for which \(W_{\max_n}\) achieves maximum changes with the ratio between the NN and NNN interactions for a weak magnetic field strength. Interestingly, we observe that with the increasing \(J_2/J_1\),  \(W_{\max_n}\) increases with \(\omega\) and as shown for \(\langle P \rangle _{\max_n}({\omega}, J_2/J_1)\),  there exists \(\omega_c\) for a fixed \(J_2/J_1\) which leads to the maximum of  \(W_{\max_n}\). 

\subsection{No benefit in scaling with NNN interactions} 

To perform the scaling analysis analytically, we will derive the time-independent Floquet Hamiltonian for the charging by using Floquet-Magnus expansion (FME) \cite{magnus1954solution, MagnusPhysreport, Bukov2015_review}. Since the charging Hamiltonian is  periodic in time, i.e., \(H_{ch}(t) = H_{ch}(t+T)\), we invoke Floquet theory to study the dynamics of the quantum battery. We calculate the time-independent Floquet Hamiltonian, \(H_{ch}^F[t_0]\), which can be used to evolve the system at stroboscopic time periods \(\tau\) (\(\tau \ \epsilon \ n(T+t_0) \ \forall \ n   \ \epsilon \, \mathbb{Z}^+\)) as
\begin{equation}
    U^F(\tau,t_0) = e^{-iH_{ch}^F[t_0]\tau}.
    \label{eq:U_Flo}
\end{equation}
\begin{widetext}

We fix $t_0 =0$, thereby neglecting  $[t_0]$ further in our calculation. To compute \(H_{ch}^F\) in the high frequency limit, i.e., when \(\omega\) is high enough, we use the Floquet-Magnus expansion upto \(\mathcal{O}(T^3)\) terms and rewrite the 
charging Hamiltonian in Eq. (\ref{eq:H_J1_J2}) as
   \begin{eqnarray}
    H^F_{ch} &= H^{F_0} + H^{F_1} + H^{F_2} + H^{F_3}, 
\end{eqnarray}
where 
\begin{eqnarray}
    H^{F_0} &=& \frac{H_1 + H_2}{2}, \,\, 
H^{F_1} = -i \frac{T}{8} [H_2, H_1], \nonumber \\
H^{F_2} &=& -\frac{T^2}{96} \big[[H_2, H_1], H_1 - H_2\big], \nonumber \\
H^{F_3} &= &i \frac{T^3}{384} \big[H_2, [[H_2, H_1], H_1]\big],
\end{eqnarray}
with

\begin{align}
H_1 &= h_z \sum_{j} \sigma_{j}^z 
+ \frac{J_1'}{2} \sum_{j} \big( \sigma_j^x \sigma_{j+1}^x + \gamma \sigma_j^y \sigma_{j+1}^y \big) + \frac{J_2'}{2} \sum_{j} \big( \sigma_j^x \sigma_{j+2}^x + \gamma \sigma_j^y \sigma_{j+2}^y \big), 
\label{eq:H1}\\
H_2 &= h_z \sum_{j} \sigma_{j}^z 
- \frac{J_1'}{2} \sum_{j} \big( \sigma_j^x \sigma_{j+1}^x + \gamma \sigma_j^y \sigma_{j+1}^y \big) - \frac{J_2'}{2} \sum_{j} \big( \sigma_j^x \sigma_{j+2}^x + \gamma \sigma_j^y \sigma_{j+2}^y \big). 
\label{eq:H2}
\end{align}
Using  Eqs. (\ref{eq:H1}) and (\ref{eq:H2}) with \(J_i = J_i'/2\) (\(i=1,2\)), we can explicitly write 
\begin{align*}
H^{F_0} &= h_z\sum_{j}\sigma_{j}^z \\
H^{F_1} &= \frac{T}{2}(1-\gamma)h_z \Bigg[J_1 \sum_{j}(\sigma_j^y \sigma_{j+1}^x +\sigma_j^x \sigma_{j+1}^y ) + J_2  \sum_j ( \sigma_j^y \sigma_{j+2}^x + \sigma_j^x \sigma_{j+2}^y )\Bigg] \\
H^{F_2} &=  - \frac{T^2}{3}(1- \gamma)h_z \Bigg[J_1 J_2 \sum_j ( \sigma_j^z \sigma_{j+1}^x \sigma_{j+2}^x -\gamma\sigma_j^z \sigma_{j+1}^y \sigma_{j+2}^y) + J_1 J_2\sum_j(\sigma_j^x \sigma_{j+1}^x \sigma_{j+2}^z - \gamma\sigma_j^y \sigma_{j+1}^y \sigma_{j+2}^z)\\ &\quad + J_1J_2\sum_j(\sigma_j^x \sigma_{j+1}^z \sigma_{j+3}^x - \gamma \sigma_j^y \sigma_{j+1}^z \sigma_{j+3}^y) + J_1 J_2\sum_j(\sigma_j^x\sigma_{j+2}^z \sigma_{j+3}^x - \gamma\sigma_j^y\sigma_{j+2}^z \sigma_{j+3}^y ) \\
&\quad + J_1^2 \sum_j( \sigma_j^x \sigma_{j+1}^z \sigma_{j+2}^x - \gamma\sigma_j^y \sigma_{j+1}^z \sigma_{j+2}^y ) + J_2^2 \sum_j( \sigma_j^x\sigma_{j+2}^z \sigma_{j+4}^x - \gamma\sigma_j^y\sigma_{j+2}^z \sigma_{j+4}^y ) +(J_1^2 + J_2^2)(1-\gamma) \sum_j \sigma_j^z \Bigg],
\end{align*}
and so on. For FME expansion, we consider \(H_{ch}^F\) instead of \(H_{ch}\) in Eq. (\ref{eq:U_num}). 
\end{widetext}


Let us now investigate how \(\langle P \rangle _{\max_n}(\omega)\) scales with \(N\) obtained analytically  using FME  and by explicit numerical simulation for a high values of \(\omega\). We find that  for a fixed \(h_z\) and a high \(\omega\), \(\langle P \rangle _{\max_n} (\omega) \sim a N + b\) where \(a\) and \(b\) are constants. We observe that  the scaling with \(N\) for different \(J_2/J_1\) values via numerical simulation exactly matches with the one obtained by using FME (see Fig. \ref{fig:NNN_MaxPvsJ2J1_merged} with \(\omega= 25\)).  It demonstrates that despite the presence of next-nearest neighbor term in the charging Hamiltonian,  the scaling cannot be made super-extensive which indicates that  SR interactions are not enough to attain super-linear scaling in power.

\section{Scaling with extended \(XY\) model}
\label{sec:extendedXY}

In order to compute the average maximum power with the charging being the extended \(XY\) model in Eq. (\ref{eq:LRextXY}), we first rewrite the Hamiltonian  in the fermionic basis following the Jordan-Wigner transformation \cite{lieb1961,barouch_pra_1970_1,barouch_pra_1970_2,glen2020} as
\begin{align}
    \sigma^x_n &=  \left( c_n + c_n^\dagger \right)
 \prod_{m<n}(1-2 c^\dagger_m c_m), \nonumber \\
    \sigma^y_n &=i\left( c_n - c_n^\dagger \right)
 \prod_{m<n}(1-2 c^\dagger_m c_m), \nonumber \\\text{and}\quad
\sigma^z_n&=1-2 c^\dagger_n  c_n,
 \label{eq:Jordan_wigner}
\end{align}
where \(c_m^\dag\)(\(c_m\)) is the creation (annihilation) operator of spinless fermions and they follow fermionic commutator algebra. The Hamiltonian in such basis reads as
\begin{eqnarray*}
   \nonumber H_{int}^{extXY}&=&\sum_{n}\sum_{r}\frac{J_r(t)}{2}[(1+\gamma)c_n^\dagger c_{n+r}\\&+&(1-\gamma)c_n^\dagger c_{n+r}^\dagger+ \text{h.c}] 
   \label{eq:JW_hamil}
\end{eqnarray*}

We prepare the initial state as the ground state of \(H_B=\frac{h_z}{2}\sum_{j}\sigma_j^z\) and it  is evolved through \(H_{ch}=H_B+H_{int}^{extXY} \) which can be mapped to a quadratic free fermionic model given in Eq. (\ref{eq:JW_hamil}). To compute the power of the battery, we rewrite the charging Hamiltonian in the momentum space by performing a Fourier transform of \(c_n=\frac{1}{\sqrt{N}}\sum_k e^{-i\phi_kn}c_k\) where \(\phi_k=\frac{(2k-1)\pi }{N}\) \(\forall k\in [1,N/2]\), \(c_k\) being the operator in the momentum basis and  in this space, the model is described as \(H_{int}^{extXY}=\underset{k\ge 0}{\bigoplus} \Psi_k H_k \Psi_k^\dagger\) where 
\begin{equation}
    H_k=\begin{bmatrix}
        \text{Re}(J_k^\alpha(t))(1+\gamma)-h_z & i\text{Im}(J_k^\alpha(t))(1-\gamma)\\
        -i\text{Im}(J_k^\alpha(t))(1-\gamma) & -\text{Re}(J_k^\alpha(t))(1+\gamma)+h_z
    \end{bmatrix},
\end{equation}
with $J_k^\alpha = \sum_{r = 1}^\frac{N}{2}J_r e^{i\phi_kr}$ and $\Psi_k^\dagger = (c_k \quad c_{-k}^\dagger )$ being the Bogoliubov basis. Hence, the effective unitary that can describe the stroboscopic evolution of QB for each mode \(k\)  is given as
\begin{eqnarray}
    U_k&=&\nonumber\exp(-iH_k(-J)\frac{T}{2})\exp(-iH_k(+J)\frac{T}{2})\\&=&\exp(-H_k^F T),
\end{eqnarray}
where \(H_k^F\) is the Floquet Hamiltonian. This Hamiltonian can be evaluated  since \(H_k^F=\sum_{k\ge 0}\beta_T(k)\vec{n}_k.\vec{\sigma}_k\) with  $\vec{n}_k=\frac{1}{\sqrt{1-u_k^0}} \begin{bmatrix}
        u^x_k, & u^y_k, & u^z_k
    \end{bmatrix}$ and $\beta_T(k)=\frac{1}{T}\cos^{-1}u_k^0$. Here $\vec{\sigma_k}$ are the Pauli matrices in the Bogoliubov basis. 
 The elements of \(\vec{n}_k\) read as
\begin{eqnarray}
    u_k^0&=&\nonumber\cos(\frac{E_k^1T}{2})\cos(\frac{E_k^2T}{2})\\&-&\nonumber\cos(\Delta_k)\sin(\frac{E_k^2T}{2})\sin(\frac{E_k^1T}{2}),\\
    u^x_k &=&\nonumber\sin(\frac{E_k^1T}{2})\sin(\frac{E_k^2T}{2})\sin(\Delta_k),\\u^y_k &=&\nonumber\cos(\frac{E_k^1T}{2})\sin(\frac{E_k^2T}{2})\sin(\theta_k^2)\nonumber\\&+&\nonumber\cos(\frac{E_k^2T}{2})\sin(\frac{E_k^1T}{2})\sin(\theta_k^1),\\u^z_k &=&\nonumber-\cos(\frac{E_k^1T}{2})\sin(\frac{E_k^2T}{2})\cos(\theta_k^2)\\&-&\cos(\frac{E_k^2T}{2})\sin(\frac{E_k^1T}{2})\cos(\theta_k^1),
\end{eqnarray}
where \(E_k^{i}=\sqrt{(\text{Re}(J_k^\alpha)_i-h_z)^2+\text{Im}(J_k^\alpha)_i^2}\), \(\cos(\theta_k^i)=(\text{Re}(J_k^\alpha)_i-h_z)/E_k^i\) 
 with \(i\in\{1,2\}\) and \(\Delta_k=\theta_k^1-\theta_k^2\). Therefore,  the energy stored in the QB at the \(n^{th}\) stroboscopic time can be analytically obtained  as
\begin{equation*}
W(nT)=\sum_{k\ge 0}\bra{0_k}(U_k^\dagger)^n H_{B,k}(U_k)^n\ket{0_k}-\bra{0_k}H_{B,k}\ket{0_k}
\end{equation*}
with \(\ket{0_k}\) is the initial state and \(H_{B,k}=-h_z\sigma^z_k\) is the battery Hamiltonian in the momentum space. The work output takes the form as
\begin{equation}
    W(nT)=\sum_{k\ge 0}2h_z(1-(n_k^z)^2)\sin^2(n\cos^{-1}u_k^0).
\end{equation}
It is evident from the above expression that the work in the stroboscopic time  depends nonlinearly on the frequency of the Floquet driving, \(T=2\pi/\omega\). Clearly, it is a function of \(\omega\), \(\alpha\), \(h_z\) and \(J\), i.e.,  \(f(\alpha, \omega, h_z, J)\) which indicates that by calibrating \(\omega\),  \(\alpha\) and  \(h_z/J\),  we can obtain nonlinear scaling in the maximum average power, \(\langle P \rangle _{\max}^\omega\) with system-size.
\bibliography{ref}

\end{document}